\newcommand{\events}{E}
\newcommand{\order}{<}
\newcommand{\procSet}{{\cal P}}		
\newcommand{\type}{{\cal T}}
\newcommand{\proc}{{P}}
\newcommand{\match}{{\cal M}}
\newcommand{\stateSet}{{\cal S}}
\newcommand{\rel}{\tau}
\newcommand{\MSCmap}{{\it L}}
\newcommand{\automaton}{{\cal A}}
\newcommand{\channel}{{\cal B}}
\newcommand{\action}{{A}}
\newcommand{\transition}{\rightarrow}
\newcommand{\final}{F}
\newcommand{\dnlc}{{\cal U}}
\newcommand{\local}{{\cal L}}
\newcommand{\labels}{{l}}
\newcommand{\content}{{\cal C}}
\newcommand{\lang}{{\cal L}}
\begin{document}

\pagestyle{headings}

\mainmatter              

\title{Controllable-choice Message Sequence Graphs}

\titlerunning{Controllable-choice MSG}

\author{Martin Chmel\'{i}k\inst{1}\thanks{The author was supported by Austrian Science Fund (FWF) Grant No P 23499-N23 on Modern Graph Algorithmic Techniques in Formal Verification, FWF NFN Grant No S11407-N23 (RiSE), ERC Start grant (279307: Graph Games), and Microsoft faculty fellows award.} and Vojt\v{e}ch
  {\v{R}}eh\'{a}k\inst{2}\thanks{The author is supported by the Czech
    Science Foundation, grant No.~P202/12/G061.}} 

\authorrunning{M. Chmel\'{i}k and V. {\v{R}}eh\'{a}k}   

\institute{ 
Institute of Science and Technology Austria (IST Austria), Klosterneuburg,
Austria\\\email{martin.chmelik@ist.ac.at}
\and
Faculty of Informatics Masaryk University, Brno, Czech
  Republic\\\email{rehak@fi.muni.cz}
}

\maketitle              

\begin{abstract}
  We focus on the realizability problem of Message Sequence Graphs (MSG),
  i.e.\ the problem whether a given MSG specification is correctly
  distributable among parallel components communicating via messages.  This
  fundamental problem of MSG is known to be undecidable. We introduce a well
  motivated restricted class of MSG, so called controllable-choice MSG, and
  show that all its models are realizable and moreover it is decidable
  whether a given MSG model is a member of this class. In more detail, this
  class of MSG specifications admits a deadlock-free realization by
  overloading existing messages with additional bounded control data.  We
  also show that the presented class is the largest known subclass of MSG
  that allows for deadlock-free realization.
\end{abstract}

\section{Introduction}

Message Sequence Chart (MSC) \cite{ITU-MSC'04} is a popular formalism for
specification of distributed systems behaviors (e.g. communication
protocols or multi-process systems). Its simplicity and intuitiveness
come from the fact that an MSC describes only exchange of messages between system
components, while other aspects of the system (e.g. content of the messages
and internal computation steps) are abstracted away.  The formalism consists
of two types of charts: (1)~basic Message Sequence Charts (bMSC) that are
suitable for designing finite communication patterns and (2)~High-level
Message Sequence Charts (HMSC) combining bMSC patterns into more complex
designs. In this paper, we focus on the further type reformulated as Message
Sequence Graphs (MSG) that has the same expressive power as HMSC but a simpler
structure, and hence it is often used in theoretical computer science
papers, see, e.g.\ \cite{MSC09,MSC01,MSC04,BOL08,gt_vmt_postproc}.

Even such incomplete models as MSG can indicate serious errors in the
designed system. The errors can cause problems during implementation or even
make it impossible. Concerning verification of MSC models, researchers have
studied a presence of a~race condition in an MSC
\cite{MSC02,MSC13,MSC27,gt_vmt_postproc}, boundedness of the~message
channels \cite{MSC01}, the~possibility to reach a~non-local branching node
\cite{MSC04,MSC35,BOL08,jard:inria-00584530,BOL01,BOL17,MSC45}, deadlocks,
livelocks, and many more. For a recent
overview of current results see, e.g. \cite{Dan2012}.

In this paper, we focus on the realizability problem of MSG
specifications,
i.e. implementation of the specification among parallel machines
communicating via messages. This problem has been studied in various
settings reflecting parameters of the parallel machines, the environment
providing message exchanges as well as the type of equivalence considered
between the MSG specification and its implementation. Some authors restricted
the communication to synchronous handshake \cite{BOL08,BOL19}, needed
several initial states in the synthesized machines \cite{MSC49}, or
considered language equivalence with global accepting states in the
implementation (the implementation accepts if the components are in specific
combinations of its states) \cite{MSC46}. From our point of view, the
crucial aspect is the attitude to non-accepted executions of the
implementation.  When language equivalence is taken into account, an
intentional deadlock can prevent a badly evolving execution from being
accepted \cite{BOL08}. In our setting every partial execution can be
extended into an accepting one. Therefore, we focus on a
deadlock-free implementation of a given MSG into Communicating Finite-State
Machines (CFM) with FIFO communicating channels and distributed acceptance
condition, i.e. a CFM accepts if each machine is in an accepting state. In
\cite{MSC41}, it has been shown that existence of a CFM realizing a given
MSG without deadlocks is undecidable. When restricted to \emph{bounded} MSG
(aka regular MSG, i.e. communicating via finite/bounded channels, and so
generating a regular language), the problem is EXPSPACE-complete
\cite{MSC41}.

In later work \cite{BOL08,MSC49}, a finite data extension of
messages was considered when realizing MSG. This is a very natural
concept because message labels in MSG are understood as message types
abstracting away from the full message content. Hence, during
implementation, the message content can be refined with additional (finite)
data that helps to control the computation of the CFM in order to achieve the
communication sequences as specified in the given MSG. The main obstacle when
realizing MSG are nodes with multiple outgoing edges --- choice nodes. In
a~CFM realization, it is necessary to ensure that all Finite-State Machines
 choose the~same successor of each choice node.
This can be hard to achieve as the~system is distributed.

In \cite{BOL08}, a class of so called \emph{local-choice} MSG
\cite{BOL01,MSC04} was shown to include only MSG realizable in the above
mentioned setting.
Local-choice specifications have the~communication after each choice node
initiated by a~single process --- the~choice leader.  Intuitively, whenever
a~local-choice node is reached, the~choice leader machine attaches to all its
outgoing messages the~information about the~chosen node. The other machines pass the~information on. This construction is
sufficient to obtain a~deadlock-free realization, for details see
\cite{BOL08}. Another possible realization of local-choice MSG is
presented in \cite{jard:inria-00584530}.
 Due to \cite{BOL17}, it is also decidable to determine whether a~given MSG is
language equivalent to some local-choice MSG and, moreover, each equivalent
MSG can be algorithmically realized by a CFM.
To the~best of our knowledge, this is the~largest class of deadlock-free realizable
specifications in the~standard setting, i.e. with additional data, FIFO
channels, and local accepting states.


In this paper, we introduce a  new class of \emph{controllable-choice} MSG that
extends this large class of realizable MSG. The crucial idea of
controllable-choice MSG is that even some non-local-choice nodes can be
implemented, if the processes initiating the communication after the
choice can agree on the successor node in advance. This is achieved by exchanging bounded additional content
in existing messages. We call choice nodes where such an agreement is
possible \emph{controllable-choice} nodes, and show that the class of MSG with
these nodes is more expressive than the class of MSG that are
language equivalent to local-choice MSG. 

\section{Preliminaries}
\label{chap:foundations}

In this section, we introduce the~Message Sequence Chart (MSC) formalism that
was standardized by the~International Telecommunications Union (ITU-T) as
Recommendation Z.120 \cite{ITU-MSC'04}. It is used to model interactions
among parallel components in a~distributed environment. First, we introduce
the basic MSC.

\subsubsection{basic Message Sequence Charts (bMSC)}
Intuitively, a bMSC identifies a~single finite execution of
a~message passing system. Processes are denoted as vertical lines ---
instances. Message exchange is represented by an arrow from the~sending
process to the~receiving process.  Every process identifies a~sequence of
actions --- sends and receives --- that are to be executed in the~order from
the~top of the~diagram. The communication among the~instances is not
synchronous and can take arbitrarily long time.
\begin{definition}
A \emph{basic Message Sequence Chart (bMSC)} $M$ is defined by a~tuple
  $(\events,\order,\procSet,\type,\proc,\match, \labels)$ where:
  \begin{compactitem}
  \item $\events$ is a~finite set of events,
  \item $\order$ is a~partial ordering on $\events$ called visual order,
  \item $\procSet$ is a~finite set of processes,
  \item $\type: \events \rightarrow \{$send$, $receive$\}$ is a~function
    dividing events into sends and receives,
  \item $\proc: \events \rightarrow \procSet$ is a~mapping that associates
    each event with a~process,
  \item $\match: \type^{-1}(\mathrm{send}) \rightarrow
    \type^{-1}(\mathrm{receive})$ is a bijective mapping, relating every
    send with a~unique receive, such that 
    a~process cannot send a message to itself, we refer to a pair of events $(e,
\match(e))$ as a \emph{message}, and
  \item $\labels$ is a function associating with every message $(e,f)$ a label
    $m$ from a finite set of message labels $\content$, i.e. $l(e,f)=m$.
  \end{compactitem}
  Visual order $\order$ is defined as the~reflexive and transitive closure
  of $ \mathcal{M} \cup \bigcup_{p \in \mathcal{P}} \order_{p}$ where
  $\order_{p}$ is a~total order on $\proc^{-1}(p)$.
\end{definition}
We require the bMSC to be \emph{first-in-first-out (FIFO)}, i.e., the~visual order
satisfies for all messages $(e,f),(e', f')$ and processes $p,p'$ the~following condition
$$e <_{p} e' \wedge  \proc(f) = \proc(f') = p' \; \Rightarrow \; f <_{p'} f'.$$
Every event of a bMSC can be represented by a letter from an alphabet 
$$\Sigma = \{p!q(m) \mid p,q \in \procSet, m \in \content\} \cup
\{q?p(m) \mid p,q \in \procSet ,m \in \content\}.$$ Intuitively, $p!q(m)$
denotes a~send event of a message with a label $m$ from a~process~$p$ to
a~process~$q$, and $q?p(m)$ represents a~receive event of a message with
a~label $m$ by $q$ from a~process~$p$.  We define a~\emph{linearization} as
a word over $\Sigma$ representing a total order of events that is consistent
with the partial order $\order$.
For a given bMSC $M$, a \emph{language}  $\lang(M)$ is
the set of all linearizations of $M$.






\subsubsection{Message Sequence Graphs}
It turns out that specifying finite communication patterns is not sufficient
for modelling complex systems. 
Message Sequence Graphs allow us to combine bMSCs into more complex
systems using alternation and iteration. An MSG is a~directed graph
with nodes labeled by bMSCs and two special nodes, the~initial
and the~terminal node.  Applying the concept of finite automata
\cite{Gill-book_62}, the graph represents a set of paths from the initial
node to the terminal node. In MSG, every such a path identifies a~sequence
of bMSCs. As every finite sequence of bMSCs can be composed
into a~single bMSC, an MSG serves as a~finite representation
of an (infinite) set of bMSCs.

\begin{definition}
A \emph{Message Sequence Graph (MSG)} is defined by a~tuple $ G=(\stateSet,\rel,s_{0},
  s_{f},\MSCmap)$, where
 $\stateSet$ is a~finite set of states,
 $\rel \subseteq \stateSet \times \stateSet$ is an edge relation,
 $s_{0} \in \stateSet$ is the initial state,
 $s_{f} \in \stateSet$ is the~terminal state, and
 $\MSCmap: \stateSet \rightarrow \mathrm{bMSC}$ is a~labeling function.
\end{definition}

W.l.o.g., we assume that there is no incoming edge to $s_{0}$ and no outgoing edge
from $s_{f}$. Moreover, we assume that there are no nodes
unreachable
from
the initial node and the~terminal node is reachable from every node in
the~graph.



Given an MSG $G=(\stateSet,\rel,s_{0}, s_{f},\MSCmap)$, a \emph{path} is
a~finite sequence of states $s_{1}s_{2} \ldots s_{k}$, where $\forall \:
1 \leq i < k : (s_{i},s_{i+1}) \in \rel$. A \emph{run} is defined as a path
with $s_{1} = s_{0}$ and $s_{k}=s_{f}$.

Intuitively, two bMSCs can be composed to a~single bMSC
by appending events of every process from the~latter bMSC at
the~end of the~process from the~precedent bMSC.
Formally, the \emph{sequential composition} of two bMSCs $M_{1}=(E_{1},$
$<_{1},\procSet,\type_{1},\proc_{1},\match_{1},\labels_{1})$ and
$M_{2}=(E_{2},$ $<_{2}, \procSet,\type_{2},\proc_{2},
\match_{2}, \labels_{2})$ such that the sets $E_{1}$ and $E_{2}$ are
disjoint (we can always rename events so that the sets become disjoint), is
the~bMSC $M_{1} \cdot M_{2} = ( E_{1} \cup E_{2},$ $<, \procSet,\type_{1}
\cup \type_{2}, \proc_{1} \cup \proc_{2},\match_{1} \cup
\match_{2}, \labels_{1} \cup \labels_{2})$, where $<$ is a~transitive
closure of $ <_{1} \cup <_{2} \cup \bigcup_{p \in \procSet} (\proc_{1}^{-1}
(p) \times \proc_{2}^{-1} (p))$.
Note that we consider the~weak concatenation, i.e. the~events from
the~latter bMSC may be executed even before some events from the~precedent
bMSC.  

Now, we extend the~MSG labeling function $\MSCmap$ to paths. Let $\sigma = s_{1}
s_{2} \ldots s_{n}$ be a~path in MSG $G$, then
$\MSCmap(\sigma) =  \MSCmap(s_{1}) \cdot \MSCmap(s_{2}) \cdot \ldots \cdot \MSCmap(s_{n})$.
For a given MSG $G$, the \emph{language} $\lang(G)$ is defined as $\bigcup\limits_{\sigma \:\mathrm{
    is~a~run~in } \:G} \lang(\MSCmap(\sigma))$.  Hence, two MSG are said to be
\emph{language-equivalent} if and only if they have the~same languages.
\subsubsection{Communicating Finite-State Machines}
A natural formalism for implementing bMSCs are Communicating Finite--State
Machines (CFM) that are used for example in \cite{MSC02,MSC07,BOL08}. The
CFM consists of a~finite number of finite-state machines that communicate
with each other by passing messages via unbounded FIFO channels.
\begin{definition}
  Given a finite set $\procSet$ of processes and a finite set of message
labels~$\content$, the~Communicating Finite-State Machine
  (CFM) $\automaton$ consists of Finite-State Machines (FSMs)
  $(\automaton_{p})_{p \in \procSet}.$
  Every $\automaton_p$ is a~tuple
  $(\stateSet_p,\action_p, \transition_p,s_{p}, \final_p)$, where:
  \begin{compactitem}
  \item $\stateSet_p$ is a~finite set of states,
  \item $\action_p \subseteq \{p!q(m) \mid q \in \procSet,m \in
\content \} \cup \{p?q(m) \mid q \in \procSet,m \in \content \}$ is a
set of actions, 
  \item $\transition_p \subseteq \stateSet_p \times \action_p \times
    \stateSet_p$ is a~transition relation,
  \item $s_{p} \in \stateSet_{p}$ is the~initial state, and
  \item $\final_p \subseteq \stateSet_p$ is a~set of local accepting states.
  \end{compactitem}
  We associate an unbounded FIFO error-free channel~$\channel_{p,q}$ with
  each pair of FSMs $\automaton_p, \automaton_q$. In every configuration,
  the~content of the~channel is a~finite word over the label alphabet
  $\content$.
\end{definition}


Whenever an FSM $\automaton_{p}$ wants to send a~message with a label $m \in
\content$ to $\automaton_{q}$, it enqueues the label $m$ into channel
$\channel_{p,q}$. We denote this action by $p!q(m)$. Provided there is
a~message with a label $m$ in the~head of channel $\channel_{p,q}$, the FSM
$\automaton_{q}$ can receive and dequeue the message with the label
$m$. This action is represented by
$q?p(m)$. 
A \emph{configuration} of a~CFM $\automaton = (\automaton_{p})_{p \in
  \procSet}$ is a~tuple $C = (s,B)$, where $s \in \prod_{p \in \procSet}
(\stateSet_p)$ and $B \in (\content^{*})^{\procSet \times \procSet}$ ---
local states of the FSMs together with the~contents of the~channels. Whenever
there is a configuration transition $C_i \stackrel{a_i}{\rightarrow}
C_{i+1}$, there exists a process $p \in \procSet$ such that the
FSM~$\automaton_p$ changes its local state by executing action $a_i \in A_p$
and modifies the content of one of the channels.

The CFM execution starts in an \emph{initial configuration} $s_{0} =
\prod_{p \in \procSet}\{s_{p}\}$ with all the~channels empty. The CFM is in
an \emph{accepting configuration}, if every FSM is in some of its final
states and all the~channels are empty. We will say that a~configuration is
a~\emph{deadlock}, if no accepting configuration is reachable from it. A~CFM
is \emph{deadlock-free} if no deadlock configuration is reachable from
the~initial configuration.  An \emph{accepting execution} of a CFM
$\automaton$ is a~finite sequence of configurations $C_{1}
\stackrel{a_1}{\rightarrow} C_{2} \stackrel{a_2}{\rightarrow} \ldots
\stackrel{a_{n-1}}{\rightarrow} C_{n}$ such that $C_{1}$ is the~initial
configuration and $C_{n}$ is an accepting configuration. The word $a_1a_2
\cdots a_{n-1}$ is then an \emph{accepted word} of $\automaton$.
%
%
Given a CFM $\automaton$, the \emph{language} $\lang(\automaton)$ is defined
as the set of all accepted words of $\automaton$.

\section{Controllable-choice Message Sequence Graphs}
For a~given MSG we try to construct a~CFM such that every execution
specified in the~MSG specification can be executed by the~CFM and the~CFM
does not introduce any additional unspecified execution.

\begin{definition}[\cite{MSC07}]
  An MSG $G$ is \emph{realizable} iff there exists a~deadlock-free CFM
  $\automaton$ such that $\lang(G) = \lang(\automaton)$.
\end{definition}


One of the most natural realizations are projections.  A
projection of a bMSC $M$ on a~process $p$, denoted by $M|_p$, is the sequence of events that are to
be executed by the process $p$ in $M$.  For every process $p
\in \procSet$, we construct a~FSM $\automaton_{p}$ that accepts a~single
word $M|_{p}$.  This construction is surprisingly powerful and models
all of the~bMSC linearizations. 
\begin{proposition}
  \label{prop:msclang} Let $M$ be a~bMSC, then CFM $\automaton = (M|_{p})_{p
    \in \procSet}$ is a~realization, i.e. $\lang(M) = \lang(\automaton)$.
\end{proposition}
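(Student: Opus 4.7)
The plan is to establish both inclusions of $\lang(\automaton) = \lang(M)$ separately. The pivotal observation is that each $\automaton_p$ is a deterministic chain whose unique accepting run spells exactly the word $M|_p$, so for any accepted word $w$ of $\automaton$ one has $w|_p = M|_p$ on every process $p$. This single fact powers both directions.

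For $\lang(\automaton) \subseteq \lang(M)$, I would take an accepted word $w$ and verify that it is consistent with the visual order $\order$. Consistency with each process order $\order_p$ follows directly from $w|_p = M|_p$, since $\order_p$ is total on $\proc^{-1}(p)$. For consistency with $\match$, the channels of $\automaton$ are FIFO and so is $M$; hence by $w|_p = M|_p$ and $w|_q = M|_q$ the sequence of $p!q(\cdot)$ sends in $w$ coincides in its order with the $\order_p$-order of those sends in $M$, and likewise for the $q?p(\cdot)$ receives. Thus the $k$-th receive of shape $q?p(\cdot)$ in $w$ is $\match(e_k)$, where $e_k$ is the $k$-th send of shape $p!q(\cdot)$, and $e_k$ must have occurred before $\match(e_k)$ in $w$ in order for the consumed label to have been enqueued. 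Since $\order$ is the transitive closure of $\match \cup \bigcup_p \order_p$, these two observations imply $w$ is a linearization of $M$.

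For $\lang(M) \subseteq \lang(\automaton)$, I would take an arbitrary linearization $w = a_1 \cdots a_n$ of $M$ and show by induction on $i$ that the CFM can execute the prefix $a_1 \cdots a_i$, each $\automaton_p$ landing in the state reached by consuming the corresponding prefix of $M|_p$. Send steps are trivial. For a receive step $a_i = q?p(m)$ that is the $k$-th event of shape $q?p(\cdot)$ in $w$, the bMSC FIFO axiom together with $w|_p = M|_p$ and $w|_q = M|_q$ yields $a_i = \match(e_k)$, where $e_k$ is the $k$-th $p!q(\cdot)$ send in $M$. Since $\match \subseteq\ \order$ forces each $e_j$ to precede $\match(e_j)$ in $w$, the sends $e_1, \ldots, e_k$ have all fired by step $i$, while only the labels of $e_1, \ldots, e_{k-1}$ have been consumed by the earlier receives; hence the head of $\channel_{p,q}$ carries $\labels(e_k,\match(e_k)) = m$ and the step is enabled. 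After executing all of $w$ every channel is empty and each $\automaton_p$ sits in its unique accepting state, yielding an accepting execution.

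The only step requiring real content is this FIFO-based alignment of the $k$-th send and the $k$-th receive on each directed $(p,q)$ channel, which I expect to be the main obstacle; the rest is routine bookkeeping once the equality $w|_p = M|_p$ is in hand.
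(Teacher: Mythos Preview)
The paper states this proposition without proof; it is treated as a standard fact about the projection-based implementation of a single bMSC. Your argument is correct and is precisely the natural route: the observation that each $\automaton_p$ is a chain accepting exactly $M|_p$, hence $w|_p = M|_p$ for every accepted $w$, together with the FIFO alignment of the $k$-th $p!q$-send with the $k$-th $q?p$-receive, is what both inclusions hinge on.

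One small remark. The word \emph{realization} in the paper also carries a deadlock-freeness requirement, which you do not address explicitly (the proposition's ``i.e.'' focuses on language equality, so this is arguably outside what is being asked). It does fall out of your argument, though: any reachable configuration of $\automaton$ is reached by executing some word $u$ with each $u|_p$ a prefix of $M|_p$, and since a receive can fire only if the matching send already has, the executed events form a $\order$-downward-closed set; thus $u$ is a prefix of some linearization of $M$, and your inductive step for the inclusion $\lang(M)\subseteq\lang(\automaton)$ shows the execution can be completed to acceptance. A single sentence to this effect would make the claim fully match the paper's notion of realization.
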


It turns out that the~main obstacle when realizing MSG are nodes with
multiple outgoing edges --- choice nodes. It is necessary to ensure that all
FSMs choose the~same run through the~MSG graph. This can be hard to achieve as the~system is distributed.


In what follows,  we present a~known class of local-choice MSG
specifications that admits a~deadlock-free realization by adding control
data into the messages. Then, we define a~new class of
\emph{controllable-choice} MSG and compare the~expressive power of the~presented
classes.

\subsubsection{Local-choice MSG} is a class studied by many authors
\cite{MSC04,MSC35,BOL08,jard:inria-00584530,BOL01,BOL17}. Let $M$ be a~bMSC,
we say that a~process $p \in \procSet$ \emph{initiates} the bMSC $M$ if
there exists an event $e$ in $M$, such that $\proc(e) = p$ and there is no
other event $e'$ in bMSC $M$ such that $e' < e$.  For a~given MSG, every
node $s \in \stateSet$ identifies a set \emph{triggers}$(s)$, the set of
processes initiating the~communication after the node $s$. Note that it
may not be sufficient to check only the~direct successor nodes in the~MSG.

\begin{definition}
  Let $G = (\stateSet,\rel,s_{0}, s_{f},\MSCmap)$ be an MSG. For a node $s \in
  \stateSet$, the~set \emph{triggers}$(s)$ contains process $p$ if and only
  if there exists a~path $\sigma = \sigma_{1} \sigma_{2} \ldots \sigma_{n}$
  in $G$ such that $(s,\sigma_{1}) \in \rel$ and $p$ initiates bMSC
  $\MSCmap(\sigma).$
\end{definition}

\begin{definition}
  A choice node $u$ is a \emph{local-choice} node iff 
  \emph{triggers}$(u)$ is a singleton. An MSG specification $G$ is
  \emph{local-choice} iff every choice node of $G$ is local-choice.
\end{definition}

Local-choice MSG specifications have the~communication after every choice node
initiated by a~single process --- the~choice leader.
In \cite{BOL08} a deadlock-free realization with additional data in messages
is proposed.  It is easy to see that every MSG specification $G$ is
deadlock-free realizable if there is a local-choice MSG $G'$ such that
$\lang(G) = \lang(G')$. Note that the equivalence can be algorithmically
checked due to \cite{BOL17}.

\subsubsection{Controllable specifications.}
The difficulties when realizing MSG are introduced by choice nodes. In
local-choice MSG, the~additional message content is used to ensure a~single
run through the~graph is executed by all FSMs. In case of
controllable-choice MSG, the~additional content serves the~same purpose but
besides informing about the~node the~FSMs are currently executing the~FSMs
also attach a~prediction about its future execution.

This allows us to relax the~restriction on choice nodes and allows certain
non-local choice nodes to be present in the~specification.  However, it is
necessary to be able to resolve every occurrence of the~choice node, i.e.\
make the~decision in advance and inform all relevant processes.

\begin{definition}
  Let $M = (\events,\order,\procSet,\type,\proc,\match, \labels)$ be a~bMSC
  and $P' \subseteq \procSet$ be a~subset of processes. A~send event $e \in
  \events$ is a~\emph{resolving event} for $P'$ iff 
$$ \forall p \in P' \ \exists e_{p} \in \proc^{-1}(p) \textrm{ such that } e < e_{p}.$$
\end{definition}

Intuitively, resolving events of $M$ for $P'$ can distribute information to
all processes of $P'$ while executing the~rest of $M$, provided that other
processes are forwarding the~information.

\begin{definition}
  Let $G = (\stateSet,\rel,s_{0}, s_{f},\MSCmap)$ be an MSG. A~choice node
  $u$ is said to be \emph{controllable-choice} iff it satisfies both of the
  following conditions:
  \begin{itemize}
  \item For every path $\sigma$ from $s_{0}$ to $u$ there exists a~resolving
    event in bMSC $\MSCmap(\sigma)$ for \emph{triggers}$(u)$.
  \item For every path $\sigma = s_{1} s_{2} \ldots u$ such that $(u,s_{1})
    \in \rel$, there exists a~resolving event in bMSC $\MSCmap(\sigma)$ for
    \emph{triggers}$(u)$.
  \end{itemize}
\end{definition}


Intuitively, a~choice node is controllable-choice, if every path from
the~initial node is labeled by a~bMSC with a~resolving event for all events
initiating the~communication after branching. Moreover, as it is necessary
to attach only bounded information, the~same restriction is required to hold
for all cycles containing a~controllable-choice node. In \cite{bachelor_thesis}
we propose an algorithm that determines whether a~given choice node is
a controllable-choice node.

\begin{definition}
  An MSG specification $G$ is \emph{controllable-choice} iff every choice
  node is either local-choice or controllable.
\end{definition}

Note that there is no bound on the~distance between the~resolving event and
the~choice node it is resolving.

\subsubsection{Local-choice vs. controllable-choice MSG.}
In the~following, we show that the controllable-choice MSG are more
expressive than local-choice MSG.  It is easy to see that every
local-choice MSG is also a~controllable-choice MSG and that not every
controllable-choice MSG is local-choice. In the following theorem, we
strengthen the result by stating that the class of MSG that are language
equivalent to some controllable-choice MSG is more expressive than the~class
of MSG that are language-equivalent to some local-choice MSG.

\begin{theorem}
  The class of MSG that are language-equivalent to some local-choice MSG,
  forms a~proper subset of MSG that are language-equivalent to some
  controllable-choice MSG.
\end{theorem}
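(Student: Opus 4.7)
The plan is to prove the theorem in two parts: first the containment, which is essentially by definition, and then the strictness, which requires exhibiting a concrete separating witness and showing no language-equivalent local-choice MSG exists.

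For containment, I would observe that every local-choice MSG is, by the very wording of the definition of controllable-choice MSG (\emph{every choice node is either local-choice or controllable}), already controllable-choice. Hence any MSG language-equivalent to a local-choice MSG is language-equivalent to a controllable-choice MSG through the same witness. This paragraph is a one-line remark.

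For strictness, I would construct an explicit MSG $G$ with at least three processes $p,q,r$ containing a choice node $u$ whose trigger set equals $\{p,q\}$, so that $u$ is \emph{not} local-choice, yet both paths from $s_0$ to $u$ and all cycles through $u$ are labeled by bMSCs with a resolving event for $\{p,q\}$; for instance, a preceding bMSC in which $r$ sends to $p$ and to $q$ (making the send from $r$ to $p$ a resolving event for $\{p,q\}$, because $q$ subsequently executes a receive), and the two successor bMSCs of $u$ having either $p$ or $q$ initiate. By construction, $G$ is controllable-choice by Definition.

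The main obstacle is the second half: to prove that no local-choice MSG $G'$ has $\lang(G') = \lang(G)$. I would exploit the structural consequence of local-choice, namely that whenever a run reaches the end of some bMSC and then performs a sequence of events that leads it into one of two different bMSC successors, the first event of the tail must be executed by the single trigger process of the chosen node. Concretely, I would identify in $\lang(G)$ two linearizations $w\cdot u_1$ and $w\cdot u_2$ that share a common prefix $w$ corresponding to completing the preceding bMSC but whose continuations $u_1$ and $u_2$ begin with events located on different processes (namely $p$ and $q$); this is possible by the choice of the two branches of $u$. For any hypothetical local-choice $G'$ with the same language, the common prefix $w$ admits essentially one ``state after $w$'' in the run-structure of $G'$ (modulo the pumping argument used in the MSG literature, which I would invoke by choosing $w$ sufficiently long to force visiting a choice node of $G'$), whose unique trigger process must then coincide with both $p$ and $q$, giving a contradiction.

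To make the pumping argument rigorous, I would pick a path in $G$ that iterates a cycle through $u$ many times, so that Myhill--Nerode style arguments on runs of $G'$ force some choice node of $G'$ to be traversed and its trigger to be identifiable from $w$; uniqueness of the trigger of that local-choice node clashes with the two continuations $u_1, u_2$ starting on different processes. This is the step I expect to be the most delicate, since one must carefully reason about the projection of $w$ onto individual processes and argue that the local state of $G'$ after $w$ is essentially determined, so that the trigger is forced.
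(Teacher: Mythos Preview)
Your containment argument matches the paper. For strictness, however, the paper takes a much shorter route than you: it exhibits a \emph{two}-process MSG (a single state $s$ with a self-loop and an edge to $s_f$, where in $\MSCmap(s)$ both $p$ and $q$ send to each other concurrently, so both are initiators), checks directly that $s$ is controllable, and then---rather than proving non-equivalence to any local-choice MSG from scratch---invokes a known necessary condition for local-choice equivalence from \cite{BOL17}. Your attempt to give a self-contained impossibility argument is more ambitious, but it has real gaps.

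The central problem is your claim that a linearization prefix $w$ ``admits essentially one state after $w$'' in a hypothetical local-choice $G'$. This is not true in general: because of weak sequential composition, a given prefix $w$ can arise from many different runs of $G'$, cut at many different positions, and the next event in a linearization need not be executed by the trigger of the choice node just reached (events from earlier bMSCs on other processes may still be pending). The Myhill--Nerode/pumping salvage you sketch does not close this; forcing some choice node of $G'$ to be visited does not pin down \emph{which} one, nor does it rule out that $G'$ has restructured the bMSCs so that a different process becomes the leader. In fact, as you describe your witness (a bMSC in which $r$ sends to $p$ and to $q$, followed by a $p$-initiated or $q$-initiated branch, with the cycle structure left unspecified), the acyclic version \emph{is} language-equivalent to a local-choice MSG: merge the $r$-bMSC with each successor branch into a single bMSC and place the choice at $s_0$, whose trigger is then $\{r\}$ alone. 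You would have to specify the cycles precisely and then still supply the missing combinatorial argument. The paper's two-process example with concurrent mutual sends is designed to defeat exactly this kind of refactoring, and offloading the impossibility to \cite{BOL17} sidesteps the delicate step you yourself flag as the hardest.
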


\begin{proof}
  Consider a MSG $G = (\stateSet,\rel,s_{0}, s_{f},\MSCmap)$ with three nodes $s_0, s_f$ and $s$, such
that $(s_0,s), (s,s), (s,s_f) \in \rel$ and the only non--empty bMSC is
$\MSCmap(s)$ with two processes $p,q$. The projection
of events on $p$ is $p!q(m), p?q(m')$ and similarly for $q$ the
projection is $q!p(m'), q?p(m)$. Note that the only choice node $s$ is
controllable as both send events are resolving events for both of the
processes.

The MSG $G$ violates a necessary condition to be language equivalent to
a local-choice specification. Intuitively, the condition states that its language must be a subset
of a language of a generic local-choice equivalent MSG (for more details
see \cite{BOL17}).
\end{proof}

\section{Realizability of Controllable-choice MSG}
In this section we present an algorithm for realization of
controllable-choice MSG.  The class of local-choice specifications admits
a~natural deadlock-free realization because every branching is controlled by
a~single process.

As the~\emph{triggers} set for controllable-choice nodes can contain
multiple processes, we need to ensure that all of them reach a~consensus
about which branch to choose.  To achieve this goal, we allow the~FSMs in
certain situations to add a~behavior prediction into its outgoing
messages. Those predictions are stored in the~finite-state control units and
are forwarded within the~existing communication to other FSMs.

The length of the~prediction should be bounded, as we can attach only
bounded information to the~messages and we need to store it in
the~finite-state control unit. Therefore, it may be necessary to generate
the~behavior predictions multiple times.  As the~realization should be
deadlock-free, we must ensure that the~predictions are not conflicting ---
generated concurrently by different FSMs. To solve this we sometimes send
together with the~prediction also an event where the~next prediction should
be generated.

\begin{definition}
  A \emph{prediction} for an MSG $G = (\stateSet,\rel,s_{0}, s_{f},\MSCmap)$
  is a~pair $(\sigma,e) \in \stateSet^{*} \times (\events \cup \bot)$, where
  $\events$ is the set of all events of bMSCs assigned by $\MSCmap$, the path
  $\sigma$ is called a~\emph{prediction path}, and 
  $e$, called a \emph{control event}, is an
  event from $\MSCmap(\sigma)$.
%
  A prediction path must satisfy one of the~following conditions:
  \begin{compactitem}
  \item The prediction path $\sigma$ is the~longest common prefix of all MSG
    runs. This special initial prediction path is named $initialPath$.
  \item The prediction path $\sigma$ is the~shortest path $\sigma =
    \sigma_{1} \sigma_{2} \ldots \sigma_{n}$ in $G$ satisfying 
    \begin{enumerate}
    \item $\sigma_{n} \in \local$,  or
    \item $\sigma_{n} \in \dnlc \: \wedge \: \exists \: 1 \leq i < n:
      \sigma_{i} = \sigma_{n}$, or
    \item $\sigma_{n} = s_{f}$,
    \end{enumerate}
    where $\local \subseteq \stateSet$ is the~set of all local-choice nodes
    and $\dnlc \subseteq \stateSet$ is the~set of all controllable-choice
    nodes.
  \end{compactitem}
\end{definition}


We refer to the~first node and to the~last node of a~prediction path
$\sigma$ by $\mathrm{\emph{firstNode}}(\sigma)$ and
$\mathrm{\emph{lastNode}}(\sigma)$, respectively.

\begin{lemma}
  \label{lem:resolv}
  If the~prediction path $\sigma$ ends with a~controllable-choice node $u$,
  the~bMSC $\MSCmap(\sigma)$ contains a~resolving event for
  \emph{triggers}$(u)$ on $\MSCmap(\sigma)$.
\end{lemma}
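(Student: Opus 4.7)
The plan is to perform a case analysis on the two shapes a prediction path may take, as enumerated in the definition immediately preceding the lemma. Because $\sigma$ ends at $u \in \dnlc$, neither condition~(1) (ending at a local-choice node) nor condition~(3) (ending at $s_{f}$) can be the reason $\sigma$ was selected; so the two subcases I need to cover are (i) $\sigma$ is $\mathit{initialPath}$, the longest common prefix of all runs, and (ii) $\sigma$ is the shortest path with $\sigma_{n}=u$ and some earlier $\sigma_{i}=u$ for $i<n$, i.e.\ condition~(2).

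In case~(i), $\sigma$ is by definition a path starting at $s_{0}$ and terminating at its final node $u$. The first bullet of the definition of a controllable-choice node then directly supplies a resolving event in $\MSCmap(\sigma)$ for \emph{triggers}$(u)$, and the case is done.

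In case~(ii), I would isolate the cycle $\sigma' = \sigma_{i+1}\sigma_{i+2}\ldots\sigma_{n}$. Since $(\sigma_{i},\sigma_{i+1}) = (u,\sigma_{i+1}) \in \rel$, this $\sigma'$ begins at an MSG-successor of $u$ and ends at $u$, matching precisely the form considered in the second bullet of the controllable-choice definition. That bullet yields a resolving event $e$ for \emph{triggers}$(u)$ inside $\MSCmap(\sigma')$. To finish, I would lift $e$ to the full $\MSCmap(\sigma) = \MSCmap(\sigma_{1}\ldots\sigma_{i})\cdot\MSCmap(\sigma')$: by the definition of sequential composition the internal partial order of $\MSCmap(\sigma')$ embeds unchanged into the composite order, so for each $p\in$ \emph{triggers}$(u)$ the witness $e_{p}\in\proc^{-1}(p)$ with $e<e_{p}$ in $\MSCmap(\sigma')$ still satisfies $e<e_{p}$ in $\MSCmap(\sigma)$. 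Hence $e$ remains a resolving event for \emph{triggers}$(u)$ in $\MSCmap(\sigma)$.

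There is no deep obstacle in the argument. The only point requiring care is the lifting step in case~(ii): one must observe explicitly that sequential composition only adds order relations and never reorders events within one of the composed bMSCs, so that a resolving event found in the suffix $\MSCmap(\sigma')$ is automatically resolving in the whole $\MSCmap(\sigma)$.
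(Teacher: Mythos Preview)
Your proposal is correct and follows essentially the same case split as the paper: (i) $\sigma=\mathit{initialPath}$, so it starts at $s_0$ and the first bullet of the controllable-choice definition applies; (ii) otherwise condition~(2) forces $u$ to occur twice, and the second bullet yields a resolving event on the induced cycle. Your treatment is actually a bit more explicit than the paper's, since you spell out why conditions~(1) and~(3) are ruled out and you justify the lifting of the resolving event from the suffix $\MSCmap(\sigma')$ into $\MSCmap(\sigma)$ via monotonicity of sequential composition; the paper simply asserts the conclusion in that case.
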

\begin{proof}
  There are two cases to consider
  \begin{itemize}
  \item If $\sigma = \mathrm{\emph{initialPath}}$, then
    \emph{firstNode}$(\sigma) = s_{0}$ and as node $u$ is controllable-choice,
    the~path $\sigma$ contains a~resolving event for \emph{triggers}$(u)$.
  \item Otherwise, the~controllable-choice node $u$ occurs twice in the~path
    $\sigma$. As every cycle containing a~controllable-choice node has to
    contain a~resolving event for the~node, 
    there is a~resolving event for \emph{triggers}$(u)$ on path $\sigma$.
  \end{itemize}
  As there are no outgoing edges allowed in $s_{f}$, the~terminal node
  $s_{f}\not\in\dnlc$.
\qed
\end{proof}
Note, that the number of events in a~given MSG is finite and
the~length of each~prediction path is bounded by $2 \cdot \vert \stateSet
\vert$.

When the~CFM execution starts, every FSM is initialized with an initial
prediction --- $($\emph{initialPath}$,e_{i})$ --- and starts to execute
the~appropriate projection of $\MSCmap($\emph{initialPath}$)$. The value of
$e_{i}$ depends on the~\emph{initialPath}. Let
\emph{lastNode}$($\emph{initialPath}$) = \sigma_{n}$.  In case of $\sigma_{n}
\in \dnlc$, the~event $e_{i}$ is an arbitrary resolving event from
$\MSCmap($\emph{initialPath}$)$ for \emph{triggers}$(\sigma_{n})$. It
follows from Lemma~\ref{lem:resolv} that there exists such an event.  If
$\sigma_{n} \in \local \cup \{s_{f}\}$, we set $e_{i} = \bot$.

Every FSM stores two predictions, one that is being currently executed and
a~future prediction that is to be executed after the~current one. Depending
on the~\emph{lastNode} of the~current prediction, there are the following
possibilities where to generate the~future prediction.
\begin{compactitem}
\item If \emph{lastNode} of the~current prediction is in $\local$,
  the~future prediction is generated by the~local-choice leader, while
  executing the~first event after branching.
\item If \emph{lastNode} of the~current prediction is in $\dnlc$,
  the~future prediction is generated by an FSM that executes the~control
  event of the~current prediction, while executing the~resolving event.
\item If the~\emph{lastNode} of the~current prediction is $s_{f}$, no
  further execution is possible and so no new prediction is generated.
\end{compactitem}
When an FSM generates a~new prediction, we require that there exists a
transition in the MSG from the last node of the current prediction path to
the first node of the future prediction path, as the concatenation of
prediction paths should result in a~path in the~MSG.  If an FSM
generates a~future prediction ending with a~controllable-choice node $u$, it
chooses an arbitrary resolving event for \emph{triggers}$(u)$ to be
the~resolving event in the~prediction. The existence of such an event
follows from Lemma~\ref{lem:resolv}.  To ensure that other FSMs are informed
about the~decisions, both predictions are attached to every outgoing
message. The computation ends when no FSM is allowed to generate any future
behavior.

\subsection{Algorithm}
\begin{algorithm}[p]
\caption{Process $p$ implementation}
\label{alg:impl}
\begin{algorithmic}[1]

\STATE \textbf{Variables}: $currentPrediction, nextPrediction, eventQueue;$
\STATE $currentPrediction \leftarrow (initialPath,e_{i});$
\STATE $nextPrediction \leftarrow \bot;$
\STATE $eventQueue \leftarrow $\textbf{push}$(\MSCmap(initialPath)|_{p});$
\WHILE{true}
\IF{$eventQueue$ is empty}
 \STATE \textbf{getNextNode}$();$ 
\ENDIF
\STATE $e \leftarrow \textbf{pop}(eventQueue);$

\IF{$e$ is a~send event}
 \IF{$e$ is the~resolving event in $currentPrediction$}
  \STATE $node \leftarrow \mathrm{lastNode}(\textbf{path}(currentPrediction))$;
  \STATE $nextPrediction \leftarrow \textbf{guessPrediction} (node);$
 \ENDIF
\STATE \textbf{send}$(e,currentPrediction,nextPrediction)$;
\ENDIF
\IF{$e$ is a~receive event}
 \STATE \textbf{receive}$(e,cP,nP);$
 \IF{$nextPrediction = \bot$}
 \STATE $nextPrediction \leftarrow nP;$
 \ENDIF
\ENDIF
\ENDWHILE
\end{algorithmic}
\end{algorithm}

\floatname{algorithm}{Function}

\begin{algorithm}[p]
\caption{getNextNode function for process $p$}
\label{proc:getnextnode}
\begin{algorithmic}[1]
\FUNCTION{\textbf{getNextNode}()}

\STATE  $node \leftarrow \mathrm{lastNode}(\textbf{path}(currentPrediction))$;

\IF{$node \in \dnlc \wedge p \in \mathrm{triggers}(node)$}
\STATE $currentPrediction \leftarrow nextPrediction;$
\STATE $nextPrediction \leftarrow \bot;$
\STATE $eventQueue \leftarrow $\textbf{push}$(\MSCmap(\textbf{path}(currentPrediction))|_{p});$
\ELSIF {$node \in \local \wedge p \in \mathrm{triggers}(node)$}
\STATE $currentPrediction \leftarrow \textbf{guessPrediction} (node);$
\STATE $nextPrediction \leftarrow \bot$
\STATE $eventQueue \leftarrow $\textbf{push}$(\MSCmap( \textbf{path}(currentPrediction))|_{p});$
\ELSE
\STATE $currentPrediction \leftarrow \bot$;
\STATE $nextPrediction \leftarrow \bot$;
\STATE \textbf{polling}(); \hspace{3em}
\ENDIF

\ENDFUNCTION{}
\end{algorithmic}
\end{algorithm}

\begin{algorithm}[p]
\caption{Polling function for process $p$}
\label{proc:polling}
\begin{algorithmic}[1]
\FUNCTION{\textbf{polling}$()$}
\WHILE{true}
\IF{$p$ has a~message in some of its input buffers}
\STATE \textbf{receive}$(e,cP,nP);$
\STATE $currentPrediction \leftarrow cP;$
\STATE $nextPrediction \leftarrow nP;$
\STATE $eventQueue \leftarrow $\textbf{push}$(\MSCmap(\textbf{path}(currentPrediction))|_{p});$
\STATE $\textbf{pop}(eventQueue);$
\STATE $\textbf{return;}$
\ENDIF
\ENDWHILE
\ENDFUNCTION{}
\end{algorithmic}
\end{algorithm}

In this section, we describe the~realization algorithm. All the~FSMs execute
the~same algorithm, an implementation of the~FSM $\automaton_{p}$ is
described in Algorithm~\ref{alg:impl}.
We use an auxiliary function \textbf{path} that returns a~prediction path
for a~given prediction.  Every FSM stores a~queue of events that it should
execute --- $eventQueue$. The queue is filled with projections of bMSCs
labeling projection paths --- $\MSCmap(\mathrm{prediction~path})|_{p}$ for
FSM $\automaton_{p}$. The execution starts with filling the~queue with
the~projection of the~$initialPath$.

The FSM executes a~sequence of events according to its $eventQueue$. In
order to exchange information with other FSMs, it adds its knowledge of
predictions to every outgoing message, and improves its own predictions by
receiving messages from other FSMs.

When the~FSM executes a~control event of
the~current prediction, it is responsible for generating the~next
prediction. The function \textbf{guessPrediction}$(u)$ behaves as described
in the~previous section. It chooses a~prediction $(\sigma, e)$, such that
$(u, \mathrm{firstNode}(\sigma)) \in \rel$. If \emph{lastNode}$(\sigma) \in
\dnlc$, then $e$ is a~chosen resolving event in bMSC $\MSCmap(\sigma)$ for
the~\emph{triggers} set of the~\emph{lastNode}$(\sigma)$. Otherwise, we
leave $e = \bot$.

If the~$eventQueue$ is empty, the~FSM runs the~\textbf{getNextNode} function
to determine the~continuation of the~execution.  If the~\emph{lastNode}
of the~current prediction is a~controllable-choice node and $p$ is in
the~\emph{triggers} set of this node, it uses the~prediction from its
variable $nextPrediction$ as its $currentPrediction$. The variable
$nextPrediction$ is set to $\bot$.

If the~\emph{lastNode} of the~currentPrediction is a~local-choice node and
$p$ is the~leader of the~choice, it guesses the~prediction and assigns it to
the~appropriate variables.
Otherwise, the~FSM forgets its predictions and enters a~special polling
state. This state is represented by the~\textbf{Polling} function. Whenever
the~FSM receives a~message, it sets its predictions according to
the~message. The pop function on line $8$ ensures the~consistency of
the~$eventQueue$.

An execution is finished successfully if all the FSMs are in the~polling
state and all the~buffers are empty. The correctness proof of the
following theorem is attached in the Appendix~\ref{chap:correctness}.

\begin{theorem}
  Let $G$ be a~controllable-choice MSG. Then the~CFM $\automaton$
  constructed by Algorithm~\ref{alg:impl} is a~deadlock-free realization
  i.e. $\lang(G) = \lang(\automaton)$.
\end{theorem}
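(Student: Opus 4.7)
The plan is to establish two facts: $\lang(G)=\lang(\automaton)$ and that $\automaton$ has no reachable deadlocks. The key tool is a global invariant maintained along every CFM execution: at any reachable configuration, the concatenation of prediction paths already consumed, together with the prediction paths currently stored by each $\automaton_p$, forms the prefix of a single run $\sigma$ of $G$ starting at $s_0$; moreover, FSMs that have reached the same position in $\sigma$ agree on both $currentPrediction$ and, once generated, $nextPrediction$. Lemma~\ref{lem:resolv} provides the glue that makes this consistency propagate across controllable-choice nodes.

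For the inclusion $\lang(G)\subseteq\lang(\automaton)$, I would take a run $\sigma=s_1\cdots s_n$ in $G$ and a linearization $w$ of $\MSCmap(\sigma)$, decompose $\sigma$ into the unique sequence of prediction paths used by the algorithm (starting with $initialPath$ and then taking shortest prediction paths ending in a local-choice node, in a repeated controllable-choice node, or in $s_f$), and schedule the events of $w$ so that each \textbf{guessPrediction} call commits to the correct successor segment. Inside each controllable-choice segment Lemma~\ref{lem:resolv} supplies a resolving event for $\mathrm{triggers}(u)$ which must precede, in visual order, at least one event of every trigger; hence the prediction generated at that event can be piggy-backed on existing messages and reach every trigger before it starts the next segment. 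The reverse inclusion $\lang(\automaton)\subseteq\lang(G)$ reduces to checking that the algorithm only ever invokes \textbf{guessPrediction}$(u)$ with a path $\sigma'$ satisfying $(u,\mathrm{firstNode}(\sigma'))\in\rel$, so the concatenation of executed prediction paths is always a prefix of an MSG run; Proposition~\ref{prop:msclang} together with the weak-concatenation semantics of bMSC composition then identifies the accepted word as a linearization of $\MSCmap$ applied to a genuine run of $G$.

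Deadlock-freeness is the main obstacle. From any reachable configuration I need to exhibit a schedule leading to an accepting one, and the dangerous case is a controllable-choice node $u$ with $|\mathrm{triggers}(u)|\ge 2$, where two triggers could in principle guess conflicting next predictions. I would rule this out by combining Lemma~\ref{lem:resolv} with the algorithmic rule that $nextPrediction$ is only overwritten while it equals $\bot$: on every prediction path ending at $u$ there is a control event, the FSM executing it is the unique generator of the next prediction, and the definition of resolving event together with FIFO propagation of the attached $(cP,nP)$ pairs guarantees that each trigger $p$ receives this $nextPrediction$ before it ever pops an event belonging to the next segment. Non-triggers that enter \textbf{polling} are woken up by subsequent forwarded messages whenever the continuation requires their participation, and remain harmlessly blocked otherwise until the execution terminates. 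Composing these observations, every reachable configuration either advances the current prediction, transitions cleanly to a consistent next prediction, or is already accepting with all queues empty and all FSMs polling; in particular no reachable configuration is a deadlock.
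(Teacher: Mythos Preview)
Your proposal is correct and follows essentially the same route as the paper: decompose runs of $G$ uniquely into prediction paths, use Lemma~\ref{lem:resolv} to guarantee that the control event precedes some event of every trigger so that the generated $nextPrediction$ reaches all of $\mathrm{triggers}(u)$ before they need it, and then verify both language inclusions against this invariant. The paper packages your global invariant as a standalone ``agreement theorem'' proved by induction on the number of prediction paths, and, notably, you treat deadlock-freeness more explicitly than the paper's appendix, which argues only $\lang(G)=\lang(\automaton)$ and leaves the absence of deadlocks implicit in the agreement argument.
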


\section{Conclusion}

In this work we studied the message sequence graph realizability problem,
i.e., the~possibility to make an efficient and correct distributed
implementation of the~specified system.  In general, the~problem of
determining whether a~given specification is realizable is
undecidable. Therefore, restricted classes of realizable specifications are
in a~great interest of software designers.

In recent years, a promissing research direction is to study deadlock-free
realizability allowing to attach bounded control data into existing
messages. This concept turns out to be possible to realize reasonable
specifications that are not realizable in the~very original setting.  In
this work we introduced a~new class of so called controllable-choice message
sequence graphs that admits a~deadlock-free realization with additional
control data in messages. In other words, we have sucesfully extended the
class of MSG conforming in the established setting of realizability.
Moreover, we have presented an algorithm producing realization for a given
controllable-choice message sequence graphs.

\bibliographystyle{plain}                           
\bibliography{cite}{}

\begin{thebibliography}{10}

\bibitem{MSC07}
R.~Alur, K.~Etessami, and M.~Yannakakis.
\newblock {Inference of message sequence charts}.
\newblock In {\em Proc. of ICSE}, pages 304--313, 2000.

\bibitem{MSC09}
R.~Alur, K.~Etessami, and M.~Yannakakis.
\newblock {Realizability and verification of MSC graphs}.
\newblock {\em Theor. Comp. Science}, 331(1):97--114, 2005.

\bibitem{MSC02}
R.~Alur, G.J. Holzmann, and D.~Peled.
\newblock {An Analyzer for Message Sequence Charts}.
\newblock In {\em Proc. of TACAS}, LNCS, pages 35--48. Springer, 1996.

\bibitem{MSC01}
R.~Alur and M.~Yannakakis.
\newblock {Model Checking of Message Sequence Charts}.
\newblock In {\em Proc. of CONCUR}, number 1664 in LNCS, pages 114--129.
  Springer, 1999.

\bibitem{MSC49}
N.~Baudru and R.~Morin.
\newblock {Safe implementability of regular message sequence chart
  specifications}.
\newblock In {\em Proc. of ACIS}, pages 210--217, 2003.

\bibitem{MSC04}
H.~Ben-Abdallah and S.~Leue.
\newblock {Syntactic Detection of Process Divergence and Non-local Choice in
  Message Sequence Charts}.
\newblock In {\em Proc. of TACAS}, volume 1217 of {\em LNCS}, pages 259--274.
  Springer, 1997.

\bibitem{MSC13}
C.A. Chen, S.~Kalvala, and J.~Sinclair.
\newblock {Race Conditions in Message Sequence Charts}.
\newblock In {\em Proc. of APLAS}, volume 3780 of {\em LNCS}, pages 195--211.
  Springer, 2005.

\bibitem{bachelor_thesis}
M.~Chmelík.
\newblock {Deciding Non--local Choice in High--level Message Sequence Charts}
  {{\it Bachelor thesis}, Faculty of Informatics, Masaryk University, Brno},
  2009.

\bibitem{Dan2012}
H.~Dan, R.M. Hierons, and S.~Counsell.
\newblock A framework for pathologies of message sequence charts.
\newblock {\em Information and Software Technology}, 2012.
\newblock In press.

\bibitem{MSC27}
E.~Elkind, B.~Genest, and D.~Peled.
\newblock {Detecting Races in Ensembles of Message Sequence Charts}.
\newblock In {\em Proc. of TACAS}, volume 4424 of {\em LNCS}, pages 420--434.
  Springer, 2007.

\bibitem{BOL17}
B.~Genest and A.~Muscholl.
\newblock {The Structure of Local-Choice in High-Level Message Sequence Charts
  (HMSC)}.
\newblock Technical report, LIAFA, Universit\'{e} Paris VII, 2002.

\bibitem{BOL19}
B.~Genest, A.~Muscholl, and D.~Kuske.
\newblock {A Kleene theorem for a class of communicating automata with
  effective algorithms}.
\newblock In {\em Proc. of DLT}, volume 3340 of {\em LNCS}. Springer, 2004.

\bibitem{BOL08}
B.~Genest, A.~Muscholl, H.~Seidl, and M.~Zeitoun.
\newblock {Infinite-State High-Level MSCs: Model-Checking and Realizability}.
\newblock {\em Journal of Computer and System Sciences}, 72(4):617--647, 2006.

\bibitem{Gill-book_62}
A.~Gill.
\newblock {\em Introduction to the Theory of Finite-state Machines}.
\newblock McGraw-Hill, 1962.

\bibitem{ITU-MSC'04}
{ITU} Telecommunication Standardization Sector~Study group 17.
\newblock {ITU} recommandation {Z}.120, {M}essage {S}equence {C}harts ({MSC}),
  2004.

\bibitem{jard:inria-00584530}
C.~Jard, R.~Abdallah, and L.~H{\'e}lou{\"e}t.
\newblock {Realistic Implementation of Message Sequence Charts}.
\newblock Rapport de recherche RR-7597, INRIA, April 2011.

\bibitem{BOL01}
P.B. Ladkin and S.~Leue.
\newblock {Interpreting Message Flow Graphs}.
\newblock {\em Formal Aspects of Computing}, 7(5):473--509, 1995.

\bibitem{MSC41}
M.~Lohrey.
\newblock {Realizability of high-level message sequence charts: closing the
  gaps}.
\newblock {\em Theoretical Computer Science}, 309(1-3):529--554, 2003.

\bibitem{MSC35}
A.J. Mooij, N.~Goga, and J.~Romijn.
\newblock {Non-local choice and beyond: Intricacies of MSC choice nodes}.
\newblock In {\em Proc. of FASE}, volume 3442 of {\em LNCS}, pages 273--288.
  Springer, 2005.

\bibitem{MSC45}
A.~Mousavi, B.~Far, and A.~Eberlein.
\newblock {The Problematic Property of Choice Nodes in high-level Message
  Sequence Charts}.
\newblock Tech. report, University of Calgary, 2006.

\bibitem{MSC46}
M.~Mukund, K.N. Kumar, and M.~Sohoni.
\newblock {Synthesizing distributed finite-state systems from MSCs}.
\newblock In {\em Proc. of CONCUR}, volume 1877 of {\em LNCS}, pages 521--535.
  Springer, 2000.

\bibitem{gt_vmt_postproc}
V.~{\v{R}}eh\'{a}k, P.~Slov\'{a}k, J.~Strej\v{c}ek, and L.~H\'{e}lou\"{e}t.
\newblock {Decidable Race Condition and Open Coregions in HMSC}.
\newblock {\em Elect. Communications of the EASST}, 29:1--12, 2010.

\end{thebibliography}

\newpage
\appendix
\section{Correctness}
\label{chap:correctness}

\begin{definition}[\cite{MSC07}] 
  A~word $ w \in \Sigma^{*}$ is \emph{well-formed} iff for every prefix $v$
  of $w$, every receive event in $v$ has a~matching send in $v$.  A word $ w
  \in \Sigma^{*}$ is \emph{complete} iff every send event in $w$ has
  a~matching receive event in $w$.
\end{definition}

\begin{lemma}
  \label{lem:CFMtoMSC}
  Let $\automaton$ be a~CFM and $w \in \lang(\automaton)$, then there exists
  a~bMSC $M$ such that $w \in \lang(M)$.
\end{lemma}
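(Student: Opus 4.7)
The plan is to read the word $w$ as already encoding all the information of a bMSC: its events, its process/type decomposition, and (via the order of $w$ combined with the FIFO discipline) its message matching.

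First I would establish two structural properties of $w$. Since $w$ is accepted, the final configuration has all channels empty and each FSM in a local accepting state, so every label enqueued by a send in $w$ has been dequeued by a matching receive in $w$; that is, $w$ is complete. Moreover, a receive action $q?p(m)$ is enabled only when $m$ is at the head of $\channel_{p,q}$, so at each prefix $v$ of $w$ the number of $q?p(\cdot)$ occurrences is bounded by the number of $p!q(\cdot)$ occurrences and the labels match in order. Hence $w$ is also well-formed.

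Next I would construct the bMSC $M=(\events,\order,\procSet,\type,\proc,\match,\labels)$ directly from $w=a_1a_2\cdots a_{n-1}$. Take $\events=\{1,\dots,n-1\}$, let $\type(i)$, $\proc(i)$, and, when applicable, $\labels(i)$ be read off $a_i$ (e.g.\ $a_i=p!q(m)$ gives $\proc(i)=p$, $\type(i)=\mathrm{send}$, label $m$). Define $\match$ by the FIFO pairing induced by $w$: for each ordered pair $(p,q)$, match the $k$-th send $p!q(m)$ with the $k$-th receive $q?p(m)$ in $w$; completeness and well-formedness of $w$ guarantee this is a well-defined bijection and that the labels agree. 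Finally, let $<_p$ be the total order on $\proc^{-1}(p)$ inherited from the positions in $w$, and set $\order$ to be the reflexive and transitive closure of $\match\cup\bigcup_{p}<_p$.

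It remains to check two things: that $M$ is a legitimate FIFO bMSC, and that $w\in\lang(M)$. For the FIFO condition, suppose $e<_p e'$ are sends on process $p$ with $\proc(\match(e))=\proc(\match(e'))=p'$; then both targets go to the same channel $\channel_{p,p'}$, and by construction $\match$ pairs them in the order in which they were enqueued, so the receives inherit the same order on $p'$. The total order on $\events$ given by the positions in $w$ is, by construction, a refinement of $\bigcup_p <_p$, and it refines $\match$ because well-formedness forces every send to precede its matching receive in $w$; hence it is consistent with $\order$, so $w$ is a linearization of $M$.

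The only mildly delicate step is the FIFO argument, because one needs to combine the CFM channel semantics with the definition of $\match$ chosen in the construction; once the matching is defined by ``$k$-th send to $k$-th receive on the same channel,'' this reduces to a direct bookkeeping check. Everything else is a transcription of $w$ into the bMSC vocabulary, and then $w\in\lang(M)$ is immediate from the construction.
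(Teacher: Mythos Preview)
Your proposal is correct and follows essentially the same route as the paper: establish that any accepted word is well-formed and complete, obtain from this a bMSC whose linearization set contains $w$, and then verify the FIFO condition using the FIFO channel semantics of the CFM. The only difference is presentational: the paper invokes a known characterization from \cite{MSC07} (well-formed and complete words are exactly the linearizations of possibly non-FIFO bMSCs) and leaves the construction implicit, whereas you spell out the bMSC explicitly from the positions of $w$ and the channel-order matching; your version is more self-contained but not a genuinely different argument.
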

\begin{proof}
  Every $w \in \lang(\automaton)$ is a~well-formed and complete word. Using
  results from \cite{MSC07} a~word $w$ is a~ bMSC (potentially non-FIFO)
  linearization iff it is well-formed and complete.
  So there exists a~potentially non-FIFO bMSC $M$, such that $w \in
  \lang(M)$. It remains to show, that the~bMSC $M$ satisfies the~FIFO
  condition to fulfill our bMSC definition, but that follows directly from
  using FIFO buffers in the~CFM.
\qed
\end{proof}

Next, we make a~few observations of the~algorithm execution.
For a~given controllable-choice MSG $G$ we construct a~CFM $\automaton =
(\automaton_{p})_{p \in \procSet}$ according to Algorithm~\ref{alg:impl}.

\begin{lemma}
  \label{lem:trigg}
  Let $(\sigma, e_{i})$ be a~prediction. FSM $\automaton_{p}$ enters
  the~\textbf{polling} function after executing $\MSCmap(\sigma)$ iff
  $$p \not \in \mathrm{\emph{triggers}} (\mathrm{\emph{lastNode}}(\sigma)).$$
\end{lemma}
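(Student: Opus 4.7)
The plan is a direct trace through Algorithm~\ref{alg:impl} and Function~\ref{proc:getnextnode}, showing that polling is reached exactly through the final \textbf{else}~branch of \textbf{getNextNode}. First I would observe that the \textbf{polling} routine is invoked at precisely one textual location, namely inside \textbf{getNextNode}, and that \textbf{getNextNode} itself is only called when the $eventQueue$ has just become empty (lines 6--8 of Algorithm~\ref{alg:impl}). So the task reduces to determining when, after $\automaton_p$ finishes the $|\procSet|$-projection of $\MSCmap(\sigma)$, control falls into that \textbf{else}~branch.

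The key invariant to maintain is that the $eventQueue$ of $\automaton_p$, once loaded from a prediction $(\sigma,e_i)$, equals $\MSCmap(\sigma)|_{p}$, and is emptied only by executing all of its events. Hence ``after executing $\MSCmap(\sigma)$'' coincides with the first call to \textbf{getNextNode} in which $\textbf{path}(currentPrediction)=\sigma$, so that $node = \mathrm{lastNode}(\sigma)$. I would then perform a case split on $node$, justified by the prediction-path definition, which forces $node\in\local\cup\dnlc\cup\{s_f\}$.

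For the direction $(\Leftarrow)$, assume $p\notin \mathrm{\emph{triggers}}(\mathrm{\emph{lastNode}}(\sigma))$. Neither of the first two guards in \textbf{getNextNode} can fire, since both require $p\in\mathrm{triggers}(node)$, so control drops to the \textbf{else} branch and \textbf{polling}$()$ is called. For the direction $(\Rightarrow)$, assume $\automaton_p$ enters \textbf{polling}; then both guards failed. If $node\in\dnlc$ we get $p\notin\mathrm{triggers}(node)$ from the negation of guard (i); if $node\in\local$ likewise from guard (ii). The remaining subcase $node=s_f$ is handled by the standing assumption that $s_f$ has no outgoing edges: then by definition there is no path $\sigma'$ with $(s_f,\sigma'_1)\in\rel$, hence $\mathrm{triggers}(s_f)=\emptyset$ and again $p\notin\mathrm{triggers}(node)$.

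I do not expect a real obstacle here; the argument is a syntactic walk through three short procedures. The only point that needs a little care is justifying that the queue cannot be emptied prematurely (no events are silently dropped by the algorithm except the single \textbf{pop} on line~8 of \textbf{polling}, which occurs \emph{after} entering polling and therefore does not affect the characterization). Noting this explicitly, together with the emptiness of $\mathrm{triggers}(s_f)$, closes the equivalence.
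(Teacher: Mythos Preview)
Your proposal is correct and follows essentially the same approach as the paper: both proofs observe that $\mathrm{lastNode}(\sigma)\in\local\cup\dnlc\cup\{s_f\}$, that $\mathrm{triggers}(s_f)=\emptyset$ because $s_f$ has no outgoing edges, and then inspect the guards of \textbf{getNextNode} to see that \textbf{polling} is reached exactly when $p\notin\mathrm{triggers}(node)$. Your write-up is in fact more careful than the paper's, which only spells out the contrapositive of the $(\Rightarrow)$ direction and leaves the rest implicit; one small slip is the phrase ``$|\procSet|$-projection'', which should read ``$p$-projection''.
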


\begin{proof}
  It holds for every prediction path $\sigma$ that \emph{lastNode}$(\sigma)
  \in \dnlc \cup \local \cup \{s_{f}\}$. Note that \emph{triggers}$(s_{f}) =
  \emptyset$ because no outgoing edge is allowed in the~terminal state of an
  MSG.  In case of $p \in $ \emph{triggers}, then \emph{lastNode}$(\sigma) \in
  \dnlc \cup \local$ and one of the~two branches in
  Function~\ref{proc:getnextnode} \textbf{getNextNode} is evaluated to true
  and \textbf{polling} function is skipped.
\qed
\end{proof}

It is not necessarily true that every FSM executes an event in every
prediction. In fact multiple predictions can be executed by the~CFM, while
a~particular FSM $\automaton_{p}$ executes the~polling function and is not
aware of predictions executed by other FSMs.

However, when a~prediction path ends with a~controllable-choice node, all
the~processes in the~\emph{triggers} set are active in the~prediction.

\begin{lemma}
  \label{lem:activity}
  Let $(\sigma,e_{i})$ be a~prediction, such that \emph{lastNode}$(\sigma)
  \in \dnlc$, then
$$p \in \mathrm{\emph{triggers}}(\mathrm{\emph{lastNode}}(\sigma)) \Rightarrow \MSCmap(\sigma)|_{p} \not = \emptyset$$
\end{lemma}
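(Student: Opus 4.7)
The plan is to derive the statement as a direct corollary of Lemma~\ref{lem:resolv} combined with the definition of a resolving event. Let $u = \mathrm{\emph{lastNode}}(\sigma)$. By hypothesis, $u \in \dnlc$, so Lemma~\ref{lem:resolv} applies and yields an event $e$ in $\MSCmap(\sigma)$ that is a resolving event for $\mathrm{\emph{triggers}}(u)$.

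Next, I would unwind the definition of a resolving event. Fix an arbitrary $p \in \mathrm{\emph{triggers}}(u)$. Since $e$ is resolving for $\mathrm{\emph{triggers}}(u)$ in $\MSCmap(\sigma)$, there must exist some event $e_p$ in $\MSCmap(\sigma)$ with $\proc(e_p) = p$ and $e < e_p$. In particular, $e_p$ is an event of $\MSCmap(\sigma)$ executed by process $p$, so $\MSCmap(\sigma)|_p$ contains at least $e_p$ and is therefore non-empty.

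The only thing to double-check is bookkeeping: the resolving event guaranteed by Lemma~\ref{lem:resolv} lives in the concatenated bMSC $\MSCmap(\sigma)$ (not in some subpath), and the witness event $e_p$ required by the resolving-event definition is also required to reside in this same bMSC, which is exactly what we need for the projection $\MSCmap(\sigma)|_p$. Since the definition of resolving event is stated relative to the ambient bMSC, both events live in $\MSCmap(\sigma)$ and the argument goes through without any additional work. I do not anticipate a genuine obstacle here; the lemma is essentially a restatement of the defining property of controllable-choice nodes together with Lemma~\ref{lem:resolv}.
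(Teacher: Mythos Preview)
Your proof is correct and follows essentially the same approach as the paper: invoke Lemma~\ref{lem:resolv} to obtain a resolving event in $\MSCmap(\sigma)$ for $\mathrm{triggers}(u)$, then use the definition of resolving event to conclude that process $p$ has an event in $\MSCmap(\sigma)$. Your version is in fact slightly more explicit than the paper's, which simply says ``there exists an event on process $p$ that is dependent on the resolving event.''
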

\begin{proof}
  Let $lastNode(\sigma)= u$. According to Lemma~\ref{lem:resolv}, there
  exists a~resolving event for \emph{triggers}$(u)$ in the bMSC
  $\MSCmap(\sigma)$. Hence, there exists an event on process $p$ that is
  dependent on the~resolving event, therefore $\MSCmap(\sigma)|_{p} \not =
  \emptyset$.
\qed
\end{proof}

Another interesting observation is that it is possible to uniquely partition
every MSG run into a~sequence of prediction paths:

\begin{proposition}
  \label{prop:partition}
  Every run $\sigma$ in $G$ can be uniquely partitioned into a~sequence of
  prediction paths such that $\sigma = initialPath \: w_{2} \ldots w_{n}$.
\end{proposition}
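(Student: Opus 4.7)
The plan is to construct the partition greedily by scanning the run $\sigma$ from left to right, and then argue that the greedy choice is forced, giving both existence and uniqueness.

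For existence, I would set $w_1 = \mathrm{initialPath}$, which is a prefix of $\sigma$ by the very definition of $\mathrm{initialPath}$ as the longest common prefix of all runs of $G$. Inductively, after $w_1 \cdots w_i$ has been read off as a prefix of $\sigma$, let $\tau$ be the remaining suffix. If $\tau$ is empty, I stop. Otherwise I define $w_{i+1}$ to be the shortest non-empty prefix $\tau_1 \cdots \tau_k$ of $\tau$ whose last node $\tau_k$ satisfies at least one of the three termination conditions in the definition of a prediction path: $\tau_k \in \local$, or $\tau_k \in \dnlc$ together with an earlier occurrence $\tau_j = \tau_k$ for some $j < k$, or $\tau_k = s_f$. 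Such a $k$ exists because $\tau$ ends in $s_f$ and condition~(3) is therefore eventually triggered, so the minimum is well-defined. Each $w_{i+1}$ so constructed is a valid prediction path: its last node satisfies one of the conditions and no proper prefix does, which is the intended reading of ``shortest'' in the definition once the first node of the path is fixed. Since every $w_i$ has length at least one and $\sigma$ is finite, the procedure terminates after finitely many steps with $w_n$ ending at $s_f$, producing the claimed decomposition $\sigma = \mathrm{initialPath} \cdot w_2 \cdots w_n$.

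Uniqueness then follows from the determinism of the construction. The block $w_1$ is forced because $\mathrm{initialPath}$ is a fixed prefix of every run. Assuming two partitions agree on $w_1, \ldots, w_i$, their $(i{+}1)$-st blocks are prefixes of the same suffix $\tau$; if they differed, one would be a proper prefix of the other, and the shorter one would already satisfy the termination conditions, contradicting the shortest-path requirement on the longer one. Inductively, the two partitions coincide everywhere.

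The main obstacle I foresee is reconciling the phrase ``shortest path in $G$'' from the definition of prediction path with the ``shortest prefix of the run along which the condition first fires'' produced by the greedy scan. The argument rests on the observation that once the first node of a prediction path is fixed, these two notions coincide: the greedy prefix is itself a path of $G$, and no shorter path with the same first node satisfying the conditions can exist without already being a proper prefix of it. This lets the run uniquely determine each block and closes the proof.
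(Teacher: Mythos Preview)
The paper states this proposition without proof, so there is nothing to compare against; your greedy left-to-right scan is exactly the natural argument one would expect and is correct in outline.

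There is one genuine wobble, and you have already put your finger on it. Your attempted reconciliation of the two readings of ``shortest'' is not right as written: the claim that ``no shorter path with the same first node satisfying the conditions can exist without already being a proper prefix of it'' is false in general. From a node $v$ the MSG may branch, with one branch hitting a local-choice node after a single step while the run follows the other branch and only meets a termination condition several steps later; the short branch is then a path of $G$ from $v$ satisfying the conditions that is \emph{not} a prefix of your greedy block. So if ``shortest path in $G$'' really meant ``minimum length among all paths of $G$ from that first node'', the greedy block would sometimes fail to be a prediction path and the partition would not exist.

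The fix is interpretive rather than technical. The way prediction paths are used in the algorithm (in particular, \textbf{guessPrediction} may return different prediction paths from the same node, depending on which successor is chosen) forces the intended reading: a prediction path of the second kind is a path in $G$ that is \emph{minimal with respect to the prefix order}, i.e.\ its last node satisfies one of the three conditions and no proper prefix does. Under that reading your greedy construction produces bona fide prediction paths, existence follows because the run ends in $s_f$, and your uniqueness argument (any two candidate $(i{+}1)$-st blocks are prefixes of the same suffix, so the shorter one witnesses that the longer one is not prefix-minimal) goes through cleanly. Replace the flawed sentence with this reading and the proof is complete.
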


The following theorem shows that in fact it is not possible to execute
simultaneously different predictions by different FSMs.

\begin{theorem}
  \label{thm:agreement}
  Let $\sigma = initialPath \: w_{2} \: \ldots \: w_{n}$ such that every
  $w_{i}$ is a~prediction path. Then every FSM $\automaton_{p}$ for $p \in$
  \emph{triggers}$(lastNode(w_{n}))$ possesses the~same future prediction
  $(w_{n+1},e_{n+1})$, after executing the~last event from
  $\MSCmap(\sigma)$.
\end{theorem}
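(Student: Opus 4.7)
The plan is to prove the statement by induction on $n$, the length of the decomposition of $\sigma$ into prediction paths.

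\textbf{Base case} ($n=1$, so $\sigma = \textit{initialPath}$). Let $u = \textit{lastNode}(\textit{initialPath})$. If $u = s_f$, then $\textit{triggers}(u) = \emptyset$ by the assumption that $s_f$ has no outgoing edges, and the statement is vacuous. If $u \in \local$, then $\textit{triggers}(u)$ is a singleton and the statement is trivial. In the remaining case $u \in \dnlc$, the initial prediction ships each FSM with $\textit{currentPrediction} = (\textit{initialPath}, e_i)$, where $e_i$ is, by construction (and using Lemma~\ref{lem:resolv}), a resolving event in $\MSCmap(\textit{initialPath})$ for $\textit{triggers}(u)$. The unique FSM $q$ that fires $e_i$ invokes \textbf{guessPrediction}$(u)$, producing $(w_2, e_2)$, which it attaches to the outgoing message of $e_i$ before sending. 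From here I would invoke a propagation argument: because Algorithm~\ref{alg:impl} copies $\textit{nextPrediction}$ onto every outgoing message and, by the definition of a resolving event, for each $p \in \textit{triggers}(u)$ there is a causal chain $e_i < e_p$ in $\MSCmap(\textit{initialPath})$, every such $p$ receives the prediction $(w_2, e_2)$ (for the first time, so the $\bot$-guard lets it store the value) before it finishes executing $\MSCmap(\textit{initialPath})|_p$.

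\textbf{Inductive step.} Write $\sigma = \sigma' \cdot w_n$ with $\sigma' = \textit{initialPath} \cdot w_2 \cdots w_{n-1}$, and let $v = \textit{lastNode}(w_{n-1})$, $u = \textit{lastNode}(w_n)$. The sub-cases $u = s_f$ and $u \in \local$ are handled as in the base case. Assume $u \in \dnlc$. By the inductive hypothesis, every $p' \in \textit{triggers}(v)$ has $\textit{nextPrediction} = (w_n, e_n)$ after executing $\MSCmap(\sigma')$. In \textbf{getNextNode} each such $p'$ either promotes $(w_n, e_n)$ to $\textit{currentPrediction}$ (when $v \in \dnlc$) or (when $v \in \local$, so $\textit{triggers}(v) = \{p'\}$) directly guesses $(w_n, e_n)$ as the new $\textit{currentPrediction}$; all other FSMs drop to \textbf{polling}. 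Using the resolving event $e_n$ guaranteed by Lemma~\ref{lem:resolv} inside $\MSCmap(w_n)$, the same propagation argument as in the base case then takes over: the process $q$ that fires $e_n$ calls \textbf{guessPrediction} to build $(w_{n+1}, e_{n+1})$, appends it to $e_n$'s message, and every $p \in \textit{triggers}(u)$ picks it up along the causal chain $e_n < e_p$ (which is nonempty by Lemma~\ref{lem:activity}), before finishing $\MSCmap(w_n)|_p$.

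\textbf{Main obstacle.} The delicate point is showing that the FSM $q$ which fires the resolving event $e_n$ actually holds $\textit{currentPrediction} = (w_n, e_n)$ at that moment. If $q \in \textit{triggers}(v)$, this follows directly from the inductive hypothesis and \textbf{getNextNode}. If $q \notin \textit{triggers}(v)$, then by Lemma~\ref{lem:trigg} $q$ entered \textbf{polling} after $\MSCmap(w_{n-1})|_q$, and it must exit polling by receiving a message whose $cP$ field is $(w_n, e_n)$. I would justify this by a secondary invariant, proved alongside the main induction, to the effect that no in-transit message of $\automaton$ ever carries a $cP$ field that disagrees with the unique ``current'' prediction path being executed; equivalently, that at any reachable configuration all nonempty channels carry the same $cP$, fixed by the latest completed \textbf{getNextNode} transition on $\textit{triggers}(v)$. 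With that invariant, any message $q$ dequeues in polling necessarily delivers $(w_n, e_n)$ as $cP$, and $q$ then pops the first event of $\MSCmap(w_n)|_q$ from its $\textit{eventQueue}$ in sync with everyone else. Formalising this invariant — essentially a bookkeeping claim about how $\textit{currentPrediction}$ and $\textit{nextPrediction}$ travel on the FIFO channels — is where the bulk of the technical work lies; once it is in hand, the outer induction reduces to the two propagation arguments sketched above.
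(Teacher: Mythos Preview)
Your inductive skeleton and three-way case split are exactly the paper's proof, and you have correctly isolated the one step the paper glosses over with ``it is easy to see that the incoming message already contains the current prediction $(w_n,e_n)$'': namely, why a process $q$ exiting \textbf{polling} necessarily picks up $cP=(w_n,e_n)$.

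However, the secondary invariant you propose to close this gap is false. The claim that ``at any reachable configuration all nonempty channels carry the same $cP$'' cannot hold: because of weak concatenation, one fast process may already be emitting messages stamped $(w_{n+1},e_{n+1})$ while a slow process is still emitting messages stamped $(w_{n-1},e_{n-1})$, and both can sit in channels simultaneously. There is no globally unique ``current'' prediction path, so the invariant cannot be established and the induction built on it would not go through.

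The argument that actually works is local and causal rather than global. If $q\notin\textit{triggers}(\textit{lastNode}(w_{n-1}))$ then, by the definition of \textit{triggers}, $q$ does not initiate $\MSCmap(w_n)$; hence its first event in $\MSCmap(w_n)$ is a receive, and the matching send lies inside $\MSCmap(w_n)$ as well. Tracing the visual order backwards through $\MSCmap(w_n)$ one eventually reaches a minimal event, which by definition is executed by some $p\in\textit{triggers}(\textit{lastNode}(w_{n-1}))$. By the induction hypothesis that $p$ already holds $\textit{currentPrediction}=(w_n,e_n)$ when it sends, and since every intermediate hop forwards $cP$ unchanged, $q$ receives $(w_n,e_n)$ in \textbf{polling}. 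Replace your channel invariant with this finite causal-chain argument (which is what the paper intends but does not spell out) and your proof is complete and coincides with the paper's.
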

\begin{proof}
  We will prove the~theorem by induction with respect to the~length of path
  $\sigma$ (measured by the~number of prediction paths):

  \paragraph{Base case}
  Let the~length of $\sigma$ be $1$, then $\sigma = initialPath$. We have to
  consider three options, depending on the~type of the~
  $lastNode(initialPath)$:
  \begin{compactitem}
  \item Let $lastNode(initialPath) = s_{f}$, then
    \emph{triggers}$(initialPath)= \emptyset$ and there is nothing to prove.
  \item Let $lastNode(initialPath) \in \local$, then there exists a~single
    leader process in the~\emph{triggers} set. The FSM representing
    the~leader process may choose prediction $(w_{2},e_{2})$.
  \item The last option is that $lastNode(initialPath) \in \dnlc$. Then
    the~resolving event $e_{i}$ in the~initial prediction is not equal to
    $\bot$. The FSM executing the~event guesses the~next prediction
    $(w_{2},e_{2})$.

    Let $p \in \mathrm{\emph{triggers}}(lastNode(initialPath)))$. In case
    the~FSM $\automaton_{p}$ is not guessing the~prediction, we need to show
    that it receives the~prediction in some of its incoming messages. As
    $e_{i}$ is a~resolving event, there exists a dependent event on process
    $p$. Let us denote the~minimal of such events $e_{p}$. Then $e_{p}$ is
    a~receive event and it is easy to see that the~prediction
    $(w_{2},e_{2})$ is attached to the~incoming message. Hence, for every $p
    \in \mathrm{\emph{triggers}}(lastNode(initialPath)))$, FSM
    $\automaton_{p}$ has its variable $nextPrediction$ set to
    $(w_{2},e_{2})$.
\end{compactitem}
It follows from Lemma~\ref{lem:trigg} that for every $p$ not in
the~\emph{triggers} set, the~FSM $\automaton_{p}$ is in the~polling state
having its variable $nextPrediction$ set to $\bot$.

\paragraph{Induction step}
Let the~length of $\sigma$ be $n$. As in the~base case, we have to consider
multiple options:

\begin{itemize}
\item Let $lastNode(w_{n}) \in \{s_{f}\} \cup \local$, then the~argument is
  the~same as in the~base case.
\item So let $lastNode(w_{n}) \in \dnlc$. From induction hypothesis, it follows
  that all FSMs $\automaton_{p}$ for $p \in
  \mathrm{\emph{triggers}}(w_{n-1})$, start to execute prediction path
  $w_{n}$ and all the~others are in the~polling state.

  Let $p \in \mathrm{\emph{triggers}}(w_{n})$. We show that FSM
  $\automaton_{p}$ executes the~projection $\MSCmap(w_{n})|_{p}$. It follows
  from Lemma~\ref{lem:activity} that this projection is non-empty. We have
  already shown that this is true for FSMs $\automaton_{p}$, such that $p
  \in \mathrm{\emph{triggers}}(w_{n-1})$. In case of $p \not \in
  \mathrm{\emph{triggers}}(w_{n-1})$, the~FSM $\automaton_{p}$ is in
  the~polling state. As it is not in the~\emph{triggers} set, its first
  action is a~receive event. It is easy to see that the~incoming message
  already contains the~current prediction $(w_{n},e_{n})$ and FSM
  $\automaton_{p}$ starts to execute $\MSCmap(w_{n})|_{p}$.

  The rest of the~proof is similar to the~base case. During the~execution of
  the~resolving event $e_{n}$ a~new prediction $(w_{n+1},e_{n+1})$ is guessed
  and distributed to all FSMs $\automaton_{p}$ for $p \in
  \mathrm{\emph{triggers}}(w_{n})$.

\end{itemize}
\qed
\end{proof}

To show that Algorithm~\ref{alg:impl} is a~deadlock-free realization of
the~class of control\-lable-choice MSG we need to show that $\lang(G) =
\lang(\automaton)$. We will divide the~proof into two parts, first showing
that $\lang(G) \subseteq \lang(\automaton)$ and finishing with
$\lang(\automaton) \subseteq \lang(G)$.

\subsection{$\lang(G) \subseteq \lang(\automaton)$}

We show that for all $w \in \lang(G)$ also holds that $w \in
\lang(\automaton)$. For every $w \in \lang(G)$ there exists a~run $\sigma$
in $G$ such that $w \in \lang(\MSCmap(\sigma))$.

We need to find a~CFM execution, such that every FSM $\automaton_{p}$
executes the~projection $\MSCmap(\sigma)|_{p}$ and ends in a~polling state
with the~CFM having all the~channels empty. Then using
Proposition~\ref{prop:msclang}, follows $\lang(M) \subseteq
\lang(\automaton)$ and especially $w \in \lang(\automaton)$.

According to Proposition~\ref{prop:partition} we can partition every run
$\sigma$ uniquely into a~sequence of prediction paths --- $initialPath \:
w_{2} \ldots w_{n}$. This sequence is a~natural candidate for prediction
paths that should be guessed during the~CFM execution.

Every CFM execution starts with an initial prediction
$(initialPath,e_{i})$. The guessed future prediction paths are $w_{2}, w_{3}
\ldots$. The guessing continues until the~last prediction path $w_{n}$ is
executed. As $\sigma$ is a~run in MSG $G$, $lastNode(w_{n}) =
s_{f}$. Therefore, \emph{triggers}$(lastNode(w_{n})) = \emptyset$. It
follows from Lemma~\ref{lem:trigg} that all the~FSMs are in the~polling
state. All the~channels are empty because of the~well-formedness and
the~completeness of the~bMSC linearizations.

\subsection{$\lang(\automaton) \subseteq \lang (G)$}
We show that for every $w \in \lang(\automaton)$ also $w \in
\lang(G)$. According to Lemma~\ref{lem:CFMtoMSC}, every $w \in
\lang(\automaton)$ identifies a~bMSC $M$. To conclude this part of
the~proof, we find a~run $\sigma$ in $G$, such that $M =
\MSCmap(\sigma)$. As $\lang(M) \subseteq \lang(G)$ we get $w \in \lang(G)$.

The $\sigma$ run in $G$ is defined inductively. Every FSM starts with
executing the~$initialPath$ prediction path. So it is safe to start the~run
$\sigma$ with this prediction path.

According to Theorem~\ref{thm:agreement} whenever some prediction $w_{i}$ is
executed, all FSMs $\automaton_{p}$ for $p \in
\mathrm{\emph{triggers}}(\mathrm{lastNode}(w_{i}))$ agree on some future
prediction $w_{i+1}$ and all $\automaton_{p}$ such that $p$ executes an
event in bMSC $\MSCmap(w_{i+1})$, execute the~projections
$\MSCmap(w_{i+1})|_{p}$. All the~other FSMs are in the~polling state and are
awakened only if needed.

The predictions are guessed in such a way that the~following condition holds:

$$(\mathrm{lastNode}(w_{i}),\mathrm{firstNode}(w_{i+1})) \in \rel$$

So it is safe to append $w_{i+1}$ at the~end of $\sigma$.  Next we show that
$\sigma$ ends with a~terminal node. The CFM accepts when all the~channels
are empty and all the~FSMs are in the~polling state. Hence, the~last
prediction that was executed ended with a~node with an empty triggers
set. In general it is possible that this is may not be the~terminal node,
but every path from this node reaches $s_{f}$ without executing any
event. So we can safely extend $\sigma$ with a~path to a~terminal node.

\end{document}